\documentclass[11pt]{article}
\usepackage[utf8]{inputenc}
\usepackage{amsmath, amssymb}
\usepackage{amsthm}
\usepackage[a4paper, margin=1in]{geometry}
\usepackage{setspace}
\usepackage{newpxtext}
\usepackage{authblk}
\usepackage[hyperindex,breaklinks]{hyperref}
\usepackage{xcolor}
\usepackage{tikz}
\usepackage{float}   
\usepackage{subcaption}
\usetikzlibrary{arrows.meta, positioning, shapes.geometric}

\newcommand{\Exp}[1]{\ensuremath{\mathsf{\mathsf{E}}\left[#1\right]}}

\DeclareMathOperator*{\argmax}{arg\,max}

\newtheorem{proposition}{Proposition}

\title{The unintended consequences of large language models as a labor-augmenting technology in science }

\author[1,2]{Eamon Duede\thanks{Email: eduede@purdue.edu}}
\author[3]{Kevin Gross\thanks{Email: kevin\_gross@ncsu.edu}}
\author[2]{M.J. Crockett\thanks{Email: mj.crockett@princeton.edu}}
\author[4,5]{Carl T. Bergstrom\thanks{Email: cbergst@uw.edu}}
\affil[1]{Purdue University}
\affil[2]{Princeton University}
\affil[3]{North Carolina State University}
\affil[4]{University of Washington}
\affil[5]{Santa Fe Institute}

\date{} % Optional

\begin{document}

\maketitle

\begin{abstract}
As a labor-augmenting technology, large language models (LLMs) have the potential to accelerate scientific activity across the research pipeline. But even if LLMs perform on par with human experts at selected tasks, their use will bring unintended consequences as they alter the balance of frictions and inducements that steer the allocation of research effort across projects. Here we develop a simple mathematical model to illustrate. In fields where LLMs are useful primarily as tools for discovering promising projects, researchers will become more selective about what they publish; where they facilitate the process of publishing existing data, researchers will become less selective. By allowing scientists to work more quickly, LLMs raise the opportunity cost of researcher time, creating incentives to refine papers less thoroughly before moving on. Enticing as it is to imagine that, by saving us time on mundane tasks, LLMs will provide us with more time to think deeply and develop projects completely, our results temper such hopes.
\end{abstract}

\section{Introduction}

Researchers are highly polarized about whether and how large language models (LLMs) can and should impact the practice of science \cite{binz2025should}. Some argue that LLMs already serve as valuable research tools. Unlike other novel technologies that accelerate only one aspect of a workflow, they claim, LLMs are poised to transform the entire research pipeline, from hypothesis generation \cite{si2025can,zhou2024hypothesis} and experiment design \cite{boiko2023autonomous,gu2024blade}, to processing, synthesizing and analyzing data \cite{majumder2025discoverybench}; from performing literature reviews \cite{wang2024autosurvey,asai2026synthesizing}, to writing up results \cite{liang2025quantifying,lu2024ai, siler2026diffusion}. Others point out that current LLMs have significant limitations that may be insurmountable: they have been shown to hallucinate plausible sounding but false content \cite{ji2023survey,huang2025survey,lin2022truthfulqa}, fabricate bibliographic citations to nonexistent scholarship \cite{walters2023fabrication,zhao2026llm}, overgeneralize scientific results \cite{peters2025generalization}, engage in sycophancy \cite{sharma2024towards}, drive deskilling \cite{budzyn2025endoscopist}, and misrepresent how conclusions are reached \cite{turpin2023language}. 

Critics have addressed the systemic harms that result from injudicious use of flawed LLMs. But even well-functioning LLMs can have unintended consequences, for example homogenizing inquiry \cite{messeri2024artificial}. As we demonstrate here, they also risk disrupting the balance of frictions and inducements that scientific norms and institutions impose \cite{mertonPrioritiesScientificDiscovery1957,kitcher1990,dasgupta1994toward,strevens_role_2003,zollman2009optimal,hull2010science} thereby reshaping scientific practice and the allocation of cognitive labor.  

To distinguish the effects of flawed technology from the consequences of labor-augmenting machines, we will model LLMs as imagined by their most enthusiastic proponents: capable of reducing time-costs without increasing errors and mistakes, at negligible financial cost.

Our model considers two key distinctions. First, \textit{when} do LLMs contribute? Do they accelerate the research process before the scientific value of a line of inquiry is known, or afterward? Second, do LLMs ameliorate the fixed costs of producing a manuscript (constructing figures, formatting bibliographies, drafting a cover letter) or do they allow more rapid development of a study beyond the bare minimum necessary for publication (e.g. generalizing theoretical results, conducting a sensitivity analysis, or excluding plausible counter-examples)? 

\section{Model}

Drawing upon classic results from optimal foraging theory in behavioral ecology \cite{charnov1976optimal}, we develop a simple model of how scientists allocate labor to various projects (Figure \ref{fig:equilibrium_combined}). For simplicity, we consider time as the only limiting resource; one could readily extend the model to consider production functions that incorporate other material inputs.

The project lifecycle is as follows: Initially, a researcher does not know how fruitful a line of inquiry will be. After a \textit{discovery phase} of fixed length $s>0$, they learn the value $v$ of their project. We envision this initial discovery phase as including activities such as identifying hypotheses, designing and conducting experiments, and evaluating preliminary data. 

Once a project's value is known, the researcher decides whether to abandon the investigation or to develop it further. If the researcher abandons the project, they receive no payoff and immediately begin the discovery phase of a new project. If they develop it, they invest further time $t=a+b$ in a \textit{development phase}. The time $a\geq0$ represents \textit{required work} that is necessary to publish a paper : creating figures, typesetting, proofreading, and submission chores, for example. The time $b\geq0$ represents additional \textit{discretionary development} during which the researchers further improve the project by conducting follow-up experiments, deepening statistical analysis, polishing prose, etc.

% \begin{figure}[H]
%         \centering
%         \includegraphics[width=.75 \linewidth]{figs/Model_diagram.pdf}
%         \caption{A schematic diagram of the model with a discovery phase of length $s$ followed by an optional development phase of length $t=a+b$.}
  
%     \label{fig:model}
% \end{figure}

Upon investing time $t$ in the development phase of a project, researchers obtain a benefit $v \, u(t)$.  The saturating function $u(t)$ captures how thoroughly the project is developed, and is quantified as the proportion of the maximum available scientific value $v$ that the researcher obtains. It has value zero until it reaches the minimum necessary investment $a$, and is increasing but concave thereafter.

Researchers select a policy that specifies how much time to invest in developing a value-$v$ project. The optimal policy maximizes the rate at which benefits accrue over the long haul of conducting many projects. We investigate changes in development time and the degree of development when LLMs speed up science, either by shortening $s$, shortening $a$, or accelerating the rate at which benefits accrue with increasing $b$.

\section{Analysis}

Our analysis centers on how accelerating distinct phases of the research pipeline affects which projects get developed and how thoroughly they are developed.  We study each phase separately, recognizing that LLM technology will likely affect several phases at once.   Proofs appear in the \hyperref[Appendix]{Appendix}. 

\medskip

\noindent {\bf Optimum policy}.   The researcher's long-run rate of return, which we denote by $\lambda$, sets the opportunity cost of their time. When deciding whether or not to develop a project, the researcher compares this opportunity cost against the gain from developing a project. A unique optimal policy exists in which (1) any project that the researcher chooses to publish will be developed up to the point that marginal value equals the long-run rate of return (Figure \ref{fig:equilibrium_combined}B), and (2) there is a threshold project value $v_0$ above which projects are developed and below which they abandoned (Figure \ref{fig:equilibrium_combined}C). Higher-valued projects are developed more extensively (Figure \ref{fig:equilibrium_combined}D). 
 
\begin{figure}
    \centering

    % Full-width figure at the top
    \begin{subfigure}[t]{\linewidth}
        \centering
        \includegraphics[
            width=\linewidth,
            height=0.14 \textheight,
            keepaspectratio
        ]{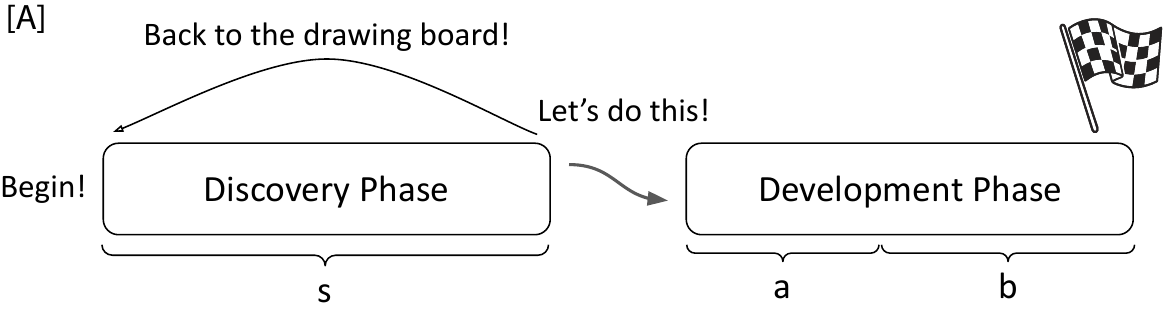}
        \label{fig:top_figure}
    \end{subfigure}
    \vspace{0.5em}

    \begin{subfigure}[t]{\linewidth}
        \centering
        \includegraphics[width=\linewidth]{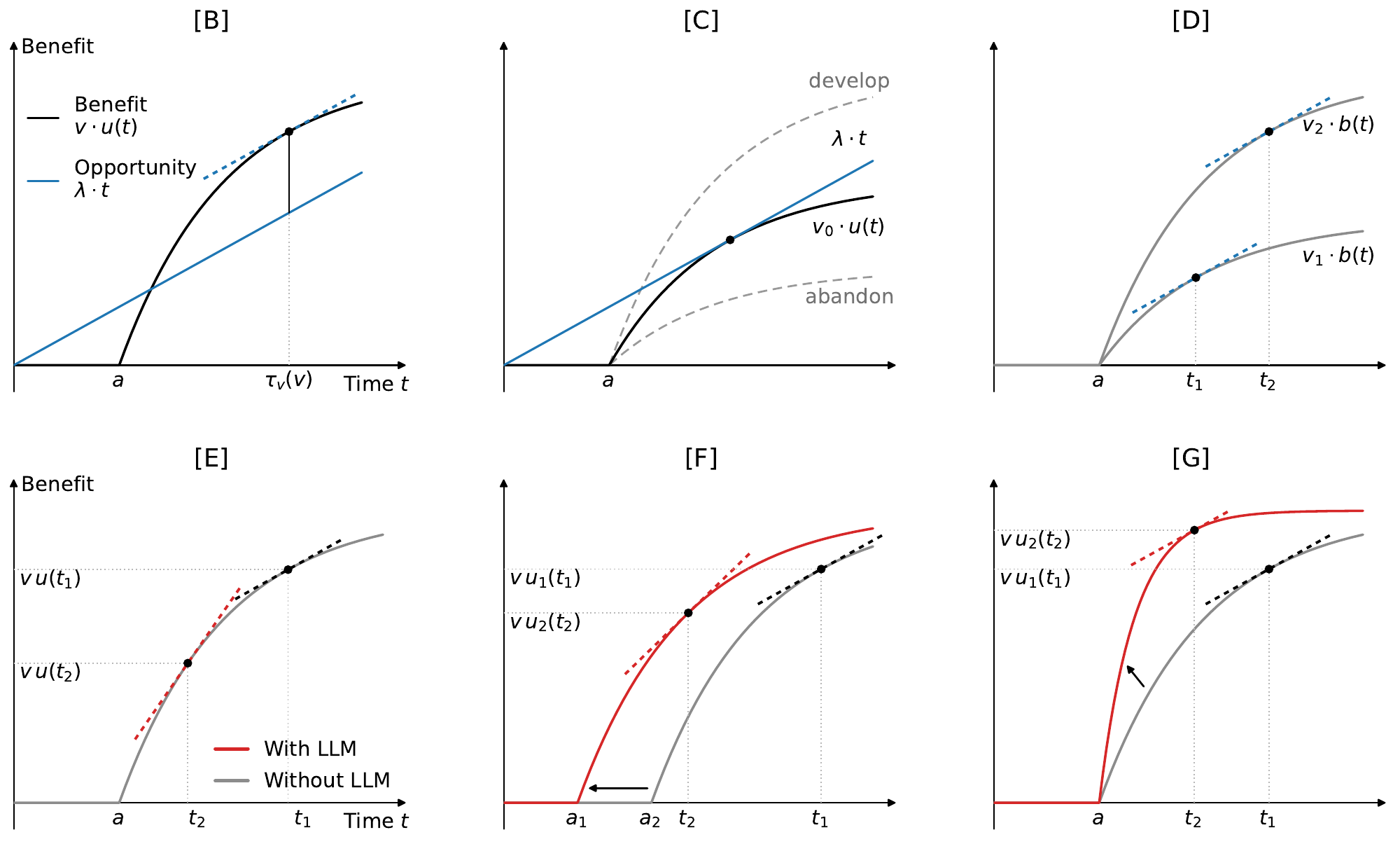}
        \label{fig:authordecision}
    \end{subfigure}
    \hfill

    \caption{Modeling optimal policies. [A] Schematic diagram of the model with a discovery phase of length $s$ followed by an optional development phase of length $t=a+b$. [B] The researcher chooses the amount of time to invest in a project of value $v$. At the optimal investment, the marginal benefit of effort equals the opportunity cost of the researcher's time.  [C] There exists a unique threshold value $v_0$ such that the line $\lambda\,t$ is tangent to the curve $v_0\,u(t)$. For $v$ values above $v_0$ authors will develop a project; below $v_0$ they will abandon it. [D] Higher valued projects are developed more thoroughly. 
    A reduction in $s$ [E] or $a$ [F] increases the opportunity cost of time so any developed project will be developed less thoroughly.~[G] When an LLM accelerates the discretionary development phase, any developed project will be developed more thoroughly.}
    \label{fig:equilibrium_combined}

\end{figure}

\medskip

\noindent{\bf Shortening the discovery period.} When LLMs shorten the discovery period $s$, the researcher accomplishes more in less time and thus the opportunity cost of time $\lambda$ increases. Researchers become selective about the projects they decide to develop, because as $s$ decreases, it is less costly to return to the drawing board. Not only do researchers now develop a smaller fraction of projects; the projects that they do develop are developed less thoroughly (Figure~\ref{fig:equilibrium_combined})E), due to the increased opportunity cost of time.
\medskip

\noindent {\bf Reduction in minimum development time}.
Reducing minimum development time $a$ likewise increases opportunity of cost of time $\lambda$.   However, it also reduces the minimum time required to develop a publishable paper.  These two effects push in opposite directions on $v_0$. At the margin of indifference, the latter effect always dominates the former, leading to a drop in $v_0$. Developed projects are again developed less thoroughly (Figure~\ref{fig:equilibrium_combined}F), because of the increased opportunity cost of time $\lambda$.

\medskip

\noindent {\bf Acceleration of discretionary effort}.
When an LLM accelerates the pace at which discretionary development improves a project, the opportunity cost of time increases, and the researcher's benefit curve is deformed upward and to the left (Figure~\ref{fig:equilibrium_combined}F).  As with a decrease in $a$, these effects push in opposite directions the threshold for developing a paper. Here the net effect on the threshold $v_0$ is ambiguous without further structure (see \hyperref[Appendix]{Appendix}). 
Accelerating the discretionary phase unambiguously increases the thoroughness to which projects are developed, because it increases the real value of discretionary development more than it increases the opportunity cost of the researcher's time (Figure~\ref{fig:equilibrium_combined}G).

\section{Discussion}

As labor-augmenting technology, LLMs shift the costs associated with producing research outputs. Our model suggests that, attuned to these costs, scientists will respond by reallocating their effort accordingly and depending on where in the research cycle LLMs have their biggest impact. When LLMs help researchers quickly identify the best projects, as in some technical fields, researchers will become more discriminating about which projects they develop. On the other hand, if LLMs mostly accelerate writing and analysis, as in fieldwork-based disciplines, the fraction of investigations that are published may increase.

The common force uniting these results is that in our model, LLMs make researchers' time more valuable. This is also why researchers become less willing to refine and further develop already-publishable results: there is something more useful they could be doing with that time. 

We have treated scientific institutions and incentives as static. This is reasonable on a short timescale; individual investigators change their actions faster than institutions change their policies. With the rapid development of LLM technology and widespread though heterogeneous adoption by researchers, a period of mismatch between institutions and practices is inevitable. 

Eventually, however, institutions will have to respond to LLM uptake \cite{kusumegi2025scientific,evans2025after}. For example, journal submissions are increasing rapidly in fields where LLMs reduce the time it takes to produce a manuscript \cite{messeri2026uncritical}, straining the peer review system \cite{bergstrom2026screening}.  Yet institutional responses must be sensitive to disciplinary differences. The effect of LLM technology on research output is not merely uniform acceleration, but instead will likely have nuanced affects based on which aspects of research are accelerated by LLMs. 

As researchers, we spend much of our time on tedious workaday tasks. It is enticing to suppose that by automating some of these, LLMs will enable us to reclaim our time and use the surplus to to think more deeply and develop our investigations more thoroughly. Unfortunately, our model reveals a flaw in this logic.  As a labor-augmenting technology, LLMs increase the opportunity cost of our time, impelling us to do more, less well---rather than the same amount, better.

\section*{Acknowledgments}
\thanks{
The idea for this paper emerged from the (Re)designing AI for Diverse Disciplines workshop at the Santa Fe Institute in October 2024. The authors thank the participants in that workshop, particularly Michael Strevens, for helpful comments and suggestions. Funding for this work was provided the Alfred P. Sloan Foundation (G-2024-22468 and G-2025-79234 to ED), the National Science Foundation (SES-2346645 to CTB and SES-2346644 to KG), and the Templeton World Charity Foundation (AWD-023376 and AWD-023376 to CB). }

\section*{AI use statement}
The text of this paper was written without AI assistance beyond the spell-checking and grammar-checking features built into Overleaf. The figures were drafted with pen and paper and hand-coded in Mathematica, but CTB used ChatGPT 5.5 assistance to refine fonts, label positions, etc., as well as for the help with the intricacies of \LaTeX ~formatting. Literature search was predominantly conducted using traditional tools such as Google Scholar, Google search, Semantic Scholar, and Web of Science but such tools now integrate AI. We wrote all proofs, but CTB and KG worked with ChatGPT 5.4, ChatGPT 5.5, and Gemini 3.1 to suggest proof strategies, and ED worked with Claude Fable 5 to check the proofs for mathematical and notational consistency and mistakes. MJC did not use LLMs in their contributions to the paper.
\section{Appendix}
\label{Appendix}
\subsection{Model}

A formal analysis of the model requires the following additional structure.  First, assume that $v$, the maximal value of a project, takes a distribution $F$ across projects.  Assume that $F$ is atomless and has support on the interval $[0, \bar{v}]$.  Let $V$ denote a random variable with distribution $F$.  

Assume also that the function $u(t)$ is continuously differentiable with the following properties: $u(t)=0$ for $t \leq a$ (there is no payoff to investing development time $\leq a$); $u'>0$ for $t>a$ (payoff strictly increases with discretionary development); $u''<0$ for $t>a$ (additional discretionary development yields diminishing returns); and $u(t)$ approaches 1 from below as $t$ gets large (maximum possible benefit is $v$).  

\subsection{Base case}

A policy $\tau(v)$ is a function that gives the time invested in developing a value-$v$ project, where $\tau(v)=0$ indicates that the project is abandoned and $\tau(v)>0$ indicates that the project is developed.  The renewal-reward theorem \cite{ross2014introduction} states that the payoff the policy $\tau$ will equal the average benefit per project divided by the average time spent on each project, i.e., 
\begin{equation}
\dfrac{\Exp{V \, u(\tau(V))}}{s + \Exp{\tau(V)}}.
\label{eq:lambda-definition}
\end{equation}
Notice that this holds whether the research carries out the projects sequentially or ``multitasks'', switching effort among multiple ongoing projects. Write the payoff for any policy $\tau$ as the function $\Lambda(\tau)$.  The researcher's optimal policy $\tau^*$ is the policy that maximizes this payoff function.  The following proposition characterizes the optimal policy.

\begin{proposition} 
There exists a payoff-optimal policy $\tau^*(v)$ that is unique except for a single $v$ at which the researcher is indifferent between developing the project or not.  There is an associated optimal payoff $\lambda^*$.  Moreover, the optimal policy is characterized by 
\[
\tau^*(v) \in \argmax_{t \geq 0} \left\{vu(t) - \lambda^* t \right\}.
\] 
Finally, the optimal payoff is the unique value at which the cost of discovery exactly balances the expected net benefit from developing the project, i.e.,
\[
\lambda^* s = \Exp{Vu(\tau^*(V)) - \lambda^*\tau^*(V)}.
\]
\end{proposition}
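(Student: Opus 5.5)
The plan is to use the Dinkelbach / Charnov marginal-value device: fix a candidate rate $\lambda \geq 0$ and study the auxiliary function
\[
G(\lambda) = \sup_{\tau} \Exp{V u(\tau(V)) - \lambda \tau(V)} - \lambda s = \Exp{\max_{t\geq 0}\{V u(t) - \lambda t\}} - \lambda s .
\]
The sup over policies can be taken pointwise in $v$ because the objective is an expectation of a function of $(V,\tau(V))$. For any policy, $\Lambda(\tau) \geq \lambda$ holds if and only if $\Exp{Vu(\tau(V)) - \lambda\tau(V)} - \lambda s \geq 0$, since the denominator $s + \Exp{\tau(V)}$ is positive. Hence $\sup_\tau \Lambda(\tau) \geq \lambda$ iff $G(\lambda)\geq 0$ (with the sup attained whenever the pointwise max is). So the optimal payoff should be the root $\lambda^*$ of $G$, and any policy attaining the pointwise max at $\lambda^*$ achieves $\Lambda = \lambda^*$. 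Conversely, any policy with $\Lambda(\tau)=\lambda^*$ must satisfy $\Exp{Vu(\tau(V))-\lambda^*\tau(V)} = \lambda^* s = \Exp{\max_t\{\cdot\}}$. This forces $\tau(v)$ to be a pointwise maximizer for $F$-almost every $v$, which gives the $\argmax$ characterization and the balance equation at once.

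First I would analyze the pointwise problem $g(v,\lambda)=\max_{t\geq0}\{vu(t)-\lambda t\}$ for $\lambda>0$. On $(0,a]$ the objective is $-\lambda t<0$, so the only candidates are $t=0$ (value $0$) and the interior critical point $t_v$ solving $v u'(t_v)=\lambda$ on $(a,\infty)$. This critical point exists and is unique when $v u'(a^+)>\lambda$, by strict concavity and because $u'\to 0$ (which follows from $u$ increasing, concave and bounded). The net gain $h(v)=vu(t_v)-\lambda t_v$ is continuous and strictly increasing in $v$ by the envelope theorem, since $\partial h/\partial v = u(t_v)>0$. It is negative for small $v$ and positive for large $v$, provided $\bar v$ is large enough; otherwise the threshold sits at $\bar v$. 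This yields a unique threshold $v_0(\lambda)$ at which $h(v_0)=0$, i.e.\ the tangency in Figure~\ref{fig:equilibrium_combined}C. The pointwise maximizer is therefore unique except at $v=v_0$, where both $0$ and $t_{v_0}$ are optimal. Since $F$ is atomless, this single point has probability zero, which gives uniqueness of $\tau^*$ up to that one indifference value.

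Next I would show that $G$ has a unique root on $(0,\infty)$. $G(\lambda)$ is a supremum of functions affine in $\lambda$, so it is convex and continuous. It is strictly decreasing, because each affine piece has slope $-(s+\Exp{\tau(V)})\leq -s<0$. At $\lambda=0$, $G(0)=\Exp{V\sup_t u(t)}=\Exp{V}>0$; the sup is not attained there, but we only need $G(0)>0$ as a limit. As $\lambda\to\infty$, $G(\lambda)\leq \bar v - \lambda s\to-\infty$. By the intermediate value theorem there is a unique root $\lambda^*>0$.

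The step I expect to be the main obstacle is the converse direction: showing that no policy does better than $\lambda^*$, and that attainment and uniqueness hold. The key facts are that $\sup_\tau\Lambda(\tau)>\lambda^*$ would force $G(\lambda)\ge 0$ for some $\lambda>\lambda^*$, contradicting strict monotonicity of $G$, and that the existence of an optimizer follows from the measurable selection $\tau^*(v)=t_v\mathbf{1}\{v>v_0\}$. Measurability and integrability of $\tau^*$ also need checking. I would get them from the implicit function theorem, which gives continuity of $t_v$ in $v$ on $(v_0,\bar v]$, together with boundedness of $t_v$ over that compact range.
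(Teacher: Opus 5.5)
Your proposal is correct and takes essentially the same route as the paper. Your $G$ is the paper's $\Phi(\lambda)=\Exp{m_\lambda(V)}-\lambda s$, and like the paper you find its unique root from continuity, strict decrease and the limits at $0$ and $\infty$, show no policy beats $\lambda^*$, attain $\lambda^*$ with the pointwise $\argmax$ policy, and get uniqueness because the single indifference point $v_0$ is $F$-null. Your small variations are sound and make the argument more careful than the paper's quick treatment: you get continuity from convexity of a supremum of affine functions rather than from the first-order condition, and you show explicitly that any optimal policy must be a pointwise maximizer $F$-a.e.
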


\begin{proof}
Suppose that the shadow (opportunity) cost of time is $\lambda>0$, and write 
\[
m_{\lambda}(v) = \max_{t \geq 0} \left[v \,u(t)-\lambda \, t \right]
\]
as the net surplus for a value-$v$ project under $\lambda$.  Define 
\[
\Phi(\lambda) = \Exp{m_\lambda(V)} - \lambda s
\]
as the difference between the expected surplus and the cost of the discovery phase under $\lambda$.  

We first show that there is a unique value $\lambda^*$ such that $\Phi(\lambda^*) = 0$.  First, we establish that $\Phi(\lambda)$ is continuous in $\lambda$.  To do so, note that by our assumptions on $u(t)$, that $t=\argmax_{t \geq 0} \left[v\,u(t)-\lambda t\right]$ will solve the first-order condition $u'(t)=\lambda/v$.  Then we can invert $u'$ and plug into $m_\lambda$ to obtain $m_\lambda(v)=v\,u(u'^{-1}(\lambda/v)) -\lambda t$.  By continuity of $u$ and $u'$, $m_\lambda(v)$ is continuous in $\lambda$.  Next, because $|m_\lambda(v)|\leq\bar{v}$, then $\Exp{m_\lambda(V)}$ and hence $\Phi(\lambda)$ are also continuous in $\lambda$.  

Next, $\lim_{\lambda \downarrow 0} \Phi(\lambda) = \Exp{V}>0$, and $\lim_{\lambda \uparrow \infty} \Phi(\lambda) = -\infty$.  Finally, $m_\lambda(v)$ is weakly decreasing in $\lambda$ for all $v$, and thus $\Phi(\lambda)$ is strictly decreasing in $\lambda$.  By continuity and strict monotonicity, there is a unique $\lambda^*$ such that $\Phi(\lambda^*)=0$.

Next we show that $\lambda^*$ is at least as large as the payoff to any policy $\tau$.  First consider any policy $\tau$ and any $v$.  Because $m_{\lambda^*}(v)$ is the maximum surplus available under $\lambda^*$, then
\[
v u(\tau(v)) - \lambda^* \tau(v) \leq m_{\lambda^*}(v).
\]
Taking expectations gives
\[
\Exp{V u(\tau(V))} - \lambda^* \Exp{\tau(V)} \leq \Exp{m_{\lambda^*}(V)} = \lambda^* s.
\]
Re-arrange to give 
\[
\Exp{V u(\tau(V))} \leq \lambda^* (s + \Exp{\tau(V)}).
\]
Rearrange once more to find
\[
\Lambda(\tau) = \dfrac{\Exp{V u(\tau(V))}}{s + \Exp{\tau(V)}} \leq \lambda^*.
\]

Next we show that there is a policy $\tau^*$ that obtains $\lambda^*$.  Set $\tau^*(v) \in \argmax_{t\geq0} \left[v \,u(t)-\lambda^* \, t \right]$. Then
\[
v u(\tau^*(v)) - \lambda^* \tau^*(v) = m_{\lambda^*}(v).
\]
Taking expectations gives
\[
\Exp{V u(\tau^*(V))} - \lambda^* \Exp{\tau^*(V)} = \Exp{m_{\lambda^*}(v)} = \lambda^* s.
\]
Rearrange as before to give $\Lambda(\tau^*) = \lambda^*$.  Thus $\tau^*$ is optimal.

Uniqueness follows from observing that, by virtue of our assumptions on $u(t)$, $\tau^*(v)$ is only non-unique at $v_0$, the project at which a researcher is indifferent between developing and abandoning.  But this is a set of $F$-measure 0.
\end{proof}

In the remainder, we let $\tau$ denote the optimal policy and let $\lambda$ denote the optimal payoff.

\begin{proposition} The researcher's payoff $\lambda$ is strictly positive. \end{proposition}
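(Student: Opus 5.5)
The plan is to exhibit a single policy with strictly positive payoff and then invoke the optimality established in the previous proposition. By that proposition, $\lambda = \lambda^*$ is at least as large as $\Lambda(\tau)$ for every policy $\tau$. It therefore suffices to build one policy $\tau$ with $\Lambda(\tau) > 0$.

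I will use the constant policy $\tau(v) = T$ for all $v$, with some fixed $T > a$. Because $u(a) = 0$ and $u' > 0$ on $(a,\infty)$, we have $u(T) > 0$. The payoff of this policy is
\[
\Lambda(\tau) = \frac{u(T)\,\Exp{V}}{s + T}.
\]
The denominator is finite and positive, since $s > 0$. The numerator is positive provided $\Exp{V} > 0$.

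The only point needing care is $\Exp{V} > 0$. $F$ has support on $[0,\bar v]$ and is atomless, so $V$ is not almost surely $0$. Here I read the support as a nondegenerate interval with $\bar v > 0$. Hence $P(V > 0) = 1 - F(0) = 1$, and since $V \ge 0$ it follows that $\Exp{V} > 0$. (The previous proof already used $\lim_{\lambda\downarrow 0}\Phi(\lambda) = \Exp{V} > 0$, so this fact can simply be cited.) We conclude $\lambda \ge \Lambda(\tau) > 0$.

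As a cross-check, there is an alternative route through the characterization $\Phi(\lambda^*) = 0$. We have $\Phi(0^+) = \Exp{V} > 0$, and $\Phi$ is strictly decreasing, so its root cannot lie at or below $0$. The direct construction above avoids any limiting issues at $\lambda = 0$, where $m_0$ involves the supremum $v$ that is never attained. I therefore prefer the direct construction. The main, minor, obstacle is only this justification that $\Exp{V} > 0$.
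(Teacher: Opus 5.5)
Your proposal is correct and is essentially the paper's argument: take the policy that develops every project to a fixed time beyond $a$ (the paper phrases this as a common discretionary time $b>0$), note that its payoff is strictly positive, and conclude from optimality that $\lambda \ge \Lambda(\tau) > 0$. Your explicit check that $\Exp{V}>0$ adds a detail the paper leaves implicit: atomlessness gives $F(0)=P(V=0)=0$.
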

\begin{proof}
    Consider a candidate policy in which every project is developed to the same discretionary development time $b$ for some $b>0$.  This policy yields a strictly positive payoff. But $\lambda$ is at least as large as this payoff, so $\lambda > 0$.
\end{proof}

\begin{proposition} The threshold project $v_0$ exists, is unique, and is interior.  \end{proposition}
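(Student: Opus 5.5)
Throughout, $\lambda>0$ is the optimal payoff (by the previous proposition). The plan is to study the net surplus $m_\lambda(v)=\max_{t\ge0}[v\,u(t)-\lambda t]$ as a function of $v$. We define the threshold $v_0$ as the unique $v$ with $g(v)=0$, where
\[
g(v)=\max_{t> a}\left[v\,u(t)-\lambda t\right],
\]
the best surplus available from developing. Abandoning ($t=0$) yields $0$, and any $t\in(0,a]$ yields $-\lambda t<0$. So the researcher develops iff $g(v)>0$, abandons iff $g(v)<0$, and is indifferent iff $g(v)=0$.

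\textbf{Existence of the maximiser and properties of $g$.} For fixed $v>0$, the supremum in $g$ is attained, because $v\,u(t)-\lambda t\to-\infty$ as $t\to\infty$ (since $u\le1$) and $v\,u(t)-\lambda t\to -\lambda a$ as $t\downarrow a$. Also $u'>0$ and $u''<0$ give a unique interior maximiser whenever $\sup_{t>a}v\,u'(t)>\lambda$. The function $g$ is a supremum of affine functions of $v$, so it is convex and continuous. It is also strictly increasing wherever the maximiser has $u(t)>0$, which holds for every $t>a$. Hence $g$ is strictly increasing on $[0,\bar v]$. At $v=0$ we get $g(0)=\sup_{t>a}(-\lambda t)=-\lambda a\le 0$. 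I would want $g(0)<0$, which is immediate if $a>0$. If $a=0$, I use instead that $g(v)<0$ for small $v>0$, because $v\,u(t)\le v$ while any useful development costs time. Strict monotonicity then gives uniqueness of any zero.

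\textbf{Interiority: $g(\bar v)>0$.} This is the main obstacle. I argue by contradiction. Suppose $g(v)\le0$ for all $v\in[0,\bar v]$. Then $m_\lambda(v)=0$ for all $v$, since the max is attained at $t=0$. Proposition~1 then gives $\lambda s=\Exp{m_\lambda(V)}=0$. Since $s>0$, this forces $\lambda=0$, contradicting the previous proposition. Therefore $g(\bar v)>0$, and by continuity $g(v)>0$ on a neighbourhood below $\bar v$.

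\textbf{Conclusion.} With $g(0)<0$ (or negative near $0$), $g(\bar v)>0$, and $g$ continuous and strictly increasing, the intermediate value theorem yields a unique $v_0\in(0,\bar v)$ with $g(v_0)=0$. Projects with $v>v_0$ are developed and those with $v<v_0$ are abandoned, which matches the characterisation $\tau(v)\in\argmax$ in Proposition~1. I expect the only delicate points to be, first, handling the $a=0$ edge case for negativity near $0$ (using $u(t)<1$ and $t>0$), and second, confirming that strict increase of $g$ holds even when the maximiser moves with $v$. The second can be settled by the envelope argument $g(v')\ge v'u(t_v)-\lambda t_v>g(v)$ for $v'>v$.
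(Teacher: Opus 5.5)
Your argument follows the paper's proof. Your $g$ is the paper's $\sigma_\lambda(v)=\max_{t\ge a}[v\,u(t)-\lambda t]$; the open and closed suprema agree by continuity of $u$ at $a$. The paper likewise uses $\sigma(0)=-\lambda a<0$, gets $\sigma(\bar v)>0$ from ``otherwise $\lambda=0$'' (which you make precise via $\lambda s=\Exp{m_\lambda(V)}$ and $s>0$), and concludes by continuity and monotonicity. Two of your steps fail as written, though.

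First, the supremum over $t>a$ need not be attained, and $g$ is not strictly increasing on all of $[0,\bar v]$. If $v\,u'(a^+)\le\lambda$, strict concavity makes $v\,u(t)-\lambda t$ strictly decreasing on $(a,\infty)$. So the value $-\lambda a$ is only approached as $t\downarrow a$, and $g\equiv-\lambda a$ on $[0,\lambda/u'(a^+)]$. Your inequality $g(v')\ge v'u(t_v)-\lambda t_v>g(v)$ has no $t_v$ to use there. The repair is as follows. $g$ is weakly increasing everywhere (since $u\ge0$). It is strictly increasing on $\{v:g(v)>-\lambda a\}$, where a maximiser $t_v>a$ with $u(t_v)>0$ does exist. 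For $a>0$ the level $0$ lies in that region, so the zero is unique and your proof is complete. The paper's appeal to ``strict monotonicity of $\sigma$'' has the same imprecision.

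Second, your $a=0$ patch is false. When $a=0$, $v\,u(t)-\lambda t\to0$ as $t\downarrow0$, so $g(v)\ge0$ for every $v$ and is never negative. In fact $g\equiv0$ on $[0,\lambda/u'(0^+)]$. So $g$ has no unique zero, and your claim ``indifferent iff $g(v)=0$'' is wrong there: for those $v$ every $t>0$ is strictly worse than $t=0$. You would have to define $v_0=\sup\{v:g(v)\le0\}=\lambda/u'(0^+)$. Your contradiction argument still gives $v_0<\bar v$, but $v_0>0$ requires $u'(0^+)<\infty$. If $u'(0^+)=\infty$, every $v>0$ is developed and $v_0=0$ is not interior. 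The paper's proof tacitly excludes this case by writing $-\lambda a<0$. You should either restrict to $a>0$ in the same way, or handle $a=0$ as above with the extra finiteness condition on $u'(0^+)$.
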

\begin{proof}
    Write $\sigma_\lambda(v) = \max_{t \geq a} \left[v \,u(t)-\lambda \, t \right]$ as the net surplus to developing a value-$v$ project.  (Note that $\sigma_\lambda(v)$ differs from $m_\lambda(v)$, because $\sigma_\lambda$ is the maximum surplus to developing it, while $m_\lambda$ is the maximum surplus to developing a project or abandoning it.  Thus $m_\lambda(v) = \max(0, \sigma_\lambda(v))$.)  Of course $\sigma(0)=-\lambda a<0$.  There must be some projects that are worth developing,  otherwise we would have $\lambda=0$.  If some projects are worth developing, then the maximum-$v$ project must be worth developing, or $\sigma(\bar{v})>0$.  Thus continuity and strict monotonicity of $\sigma(v)$ imply the existence and uniqueness of $v_0$.
\end{proof}

\begin{proposition}
For developed projects ($v>v_0$), the optimal development time $\tau(v)$ and the thoroughness of development $u(\tau(v))$ strictly increase in $v$.
\end{proposition}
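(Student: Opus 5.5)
For a developed project ($v>v_0$), I would first show that the optimal development time is interior, $\tau(v)>a$, and is pinned down by the first-order condition
\[
u'(\tau(v)) = \frac{\lambda}{v}.
\]
Since $v>v_0$, the earlier proposition on the threshold gives $\sigma_\lambda(v)>0$, so the maximizer lies in $[a,\infty)$ rather than at $t=0$.

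The maximizer cannot sit at $t=a$ either. At $t=a$ we have $v\,u(a)-\lambda a=-\lambda a<0$, because $\lambda>0$ by the proposition on positivity of the payoff. This value is below $\sigma_\lambda(v)>0$. Nor can the maximizer be unboundedly large: $v\,u(t)-\lambda t\to-\infty$ because $u$ is bounded by $1$ and $\lambda>0$.

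On $(a,\infty)$ the objective is strictly concave, since $u''<0$ there. So the interior maximizer is unique and satisfies $v\,u'(t)=\lambda$. As a side remark, this also shows that $u'$ takes the value $\lambda/v$ somewhere on $(a,\infty)$.

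Next I would exploit the strict monotonicity of $u'$ on $(a,\infty)$, which follows from $u''<0$. Because $u'$ is strictly decreasing there, it is injective and has a strictly decreasing inverse on its range. Hence
\[
\tau(v)=(u')^{-1}(\lambda/v).
\]

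Now take $v_0<v_1<v_2$. Since $\lambda>0$, we have $\lambda/v_1>\lambda/v_2$, and therefore $\tau(v_1)<\tau(v_2)$. To avoid any worry about the domain of the inverse, I could argue directly instead. Suppose $\tau(v_2)\le\tau(v_1)$. Then
\[
u'(\tau(v_2))\ge u'(\tau(v_1)),
\]
that is, $\lambda/v_2\ge\lambda/v_1$, which contradicts $v_2>v_1$.

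Here $\lambda$ is the optimal payoff, a single constant that does not depend on $v$. This is the key point: varying $v$ changes only the marginal condition, not the opportunity cost.

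Finally, thoroughness is $u(\tau(v))$. Since $u'>0$ on $(a,\infty)$, $u$ is strictly increasing there. Both $\tau(v_1)$ and $\tau(v_2)$ lie in $(a,\infty)$, so $\tau(v_1)<\tau(v_2)$ gives $u(\tau(v_1))<u(\tau(v_2))$.

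The main obstacle is the first step: making rigorous that every developed project has an interior optimum satisfying the first-order condition, rather than a corner at $a$ or no maximizer at all. I would handle this with the coercivity argument above, using $u\le 1$ and $\lambda>0$ to get a maximizer on a compact set $[a,T]$. I would then use the negative value at $t=a$, together with $\sigma_\lambda(v)>0$, to exclude that endpoint. Differentiability of $u$ then yields the first-order condition.
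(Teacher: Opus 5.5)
Your proof is correct and follows the same route as the paper: the first-order condition $v\,u'(\tau(v))=\lambda$ with $\lambda$ independent of $v$, then strict concavity of $u$ on $(a,\infty)$ to get $\tau$ strictly increasing, then $u'>0$ to carry this over to $u(\tau(v))$; your coercivity and endpoint argument adds the justification, which the paper takes for granted, that the optimum is interior. One small slip: $-\lambda a<0$ fails when $a=0$, but $-\lambda a\le 0<\sigma_\lambda(v)$ still rules out the endpoint, so the argument is unaffected.
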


\begin{proof}
For developed projects, the optimal development time satisfies the first-order condition $v \, u'(\tau(v)) = \lambda$.  Strict concavity of $u$ for $t > a$ establishes that  $\tau(v)$  strictly increases in $v$.  Strict monotonicity of $u$ for $t>a$ then establishes that $u(\tau(v))$ strictly increase in $v$.
\end{proof}

\subsection{Comparative statics}

For these results, it is convenient to reparameterize the model with the function $z(b) = u(a+b)$ for $b \geq 0$, that is, $z(b)$ captures the portion of $u(t)$ that controls how the payoff increases with discretionary development time $b$. Now the researcher's post-$v$ problem is to choose $b\geq0$ to maximize $\left\{0, vz(b)-\lambda(a+b)\right\}$.  Write the surplus to developing a value-$v$ project at the optimum policy as $\sigma(v)  = \max_{b \geq 0} \left[v \,z(b)-\lambda \, (a+b) \right]$.  The researcher's policy $\beta(v)$ now gives the discretionary development time at $v$, where $\beta(v)=0$ is taken to mean that the project is abandoned.  (Note that all developed projects will have $\beta(v)>0$.) Finally, write the expected benefit of a project as $\mu = \Exp{V \, z(\beta(V))}$ and the expected time investment as $\rho = s + \Exp{\mathsf{1}_{V\geq v_0} \, (a+\beta(V))}$, so that the researcher's payoff is $\lambda = \mu/\rho$.

Occasionally we will also use $\beta_0$ to denote the discretionary development time of the threshold project $v_0$.

\subsubsection{Shortening the discovery phase}
 
\begin{proposition} $\lambda$ strictly decreases with $s$. 
\label{prop:l-alpha}
\end{proposition}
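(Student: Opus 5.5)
We prove this by comparing optimal payoffs at two discovery lengths, using the characterization in the first proposition: $\lambda$ is the unique root of $\Phi_s(\lambda) = \Exp{m_\lambda(V)} - \lambda s$. Fix $s_1 < s_2$ with corresponding optimal payoffs $\lambda_1, \lambda_2$. The aim is to show $\lambda_1 > \lambda_2$. There are two natural routes, a policy-comparison argument and a root-shifting argument, and we give both.

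\emph{Policy-comparison route.} Let $\tau_2$ be the optimal policy at $s_2$. Apply this same policy when the discovery length is $s_1$. Its payoff is
\[
\frac{\Exp{V u(\tau_2(V))}}{s_1 + \Exp{\tau_2(V)}}.
\]
The numerator is unchanged from the $s_2$ problem, and the denominator is strictly smaller because $s_1 < s_2$. The numerator equals $\lambda_2 (s_2 + \Exp{\tau_2(V)})$, which is strictly positive since $\lambda_2 > 0$ by the proposition establishing positivity of the payoff. Hence this payoff strictly exceeds $\lambda_2$. By optimality, $\lambda_1$ is at least the payoff of any policy at $s_1$, so $\lambda_1 > \lambda_2$.

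\emph{Root-shifting route.} $\Phi_s(\lambda)$ is strictly decreasing in $s$ for each fixed $\lambda > 0$. Therefore
\[
\Phi_{s_1}(\lambda_2) = \Phi_{s_2}(\lambda_2) + \lambda_2 (s_2 - s_1) = \lambda_2 (s_2 - s_1) > 0.
\]
The first proposition shows $\Phi_{s_1}$ is strictly decreasing in $\lambda$ with unique root $\lambda_1$. Since $\Phi_{s_1}$ is positive at $\lambda_2$, its root must lie to the right, so $\lambda_1 > \lambda_2$.

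The only delicate point is strictness. In both routes it rests on $\lambda_2 > 0$, equivalently on a strictly positive expected benefit $\Exp{V u(\tau_2(V))}$. The earlier positivity proposition supplies exactly this, so no real obstacle remains. I would present the policy-comparison argument as the main proof because it needs no continuity facts, and note the root-shifting argument as an alternative.
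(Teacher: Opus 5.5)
Your policy-comparison argument is correct and is essentially the paper's proof, which reuses the $s_2$-optimal policy at $s_1$ to get the chain $\lambda(s_1)=\Lambda_1(s_1)\ge\Lambda_2(s_1)>\Lambda_2(s_2)=\lambda(s_2)$. You justify the strict step by appealing to $\lambda_2>0$, which is more careful than the paper's blanket claim that every policy's payoff strictly decreases in $s$ (that claim fails for the abandon-everything policy), and your root-shifting alternative via $\Phi_{s_1}(\lambda_2)=\lambda_2(s_2-s_1)>0$ is also valid.
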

\begin{proof}
Suppose $s_1 < s_2$, let $\beta_i$ be the optimal policy under $s_i$, let $\Lambda_i(s)$ be the payoff to policy $\beta_i$ under $s$, and let $\lambda(s)$ be the maximal payoff under $s$.  Then $\lambda(s_1) = \Lambda_1(s_1) \geq \Lambda_2(s_1)>\Lambda_2(s_2)=\lambda(s_2)$, where the first inequality follows from the fact that $b_1$ is payoff-maximizing at $s_1$, and the second inequality follows from the fact that under any policy the researcher's payoff strictly decreases with $s$. 
\end{proof}

\begin{proposition} Decreasing $s$ increases $v_0$. \end{proposition}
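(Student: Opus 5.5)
The plan is to show that the threshold $v_0$ is a strictly increasing function of the opportunity cost $\lambda$, with $a$ and $z$ held fixed, and then to invoke Proposition~\ref{prop:l-alpha}. That proposition says that decreasing $s$ strictly increases $\lambda$. The discovery length $s$ does not appear in the post-discovery problem $\max\{0,\, v z(b) - \lambda(a+b)\}$ except through $\lambda$. So the comparative static in $s$ reduces to a comparative static in $\lambda$.

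First, I characterize $v_0$ as the root of the developing surplus. Write
\[
\sigma(v,\lambda) = \max_{b\ge 0}\left[v z(b) - \lambda(a+b)\right].
\]
By the proposition establishing that $v_0$ exists, is unique and is interior, $v_0$ is the unique $v \in (0,\bar v)$ with $\sigma(v_0,\lambda)=0$. That proof also gives the sign structure: $\sigma(\cdot,\lambda)$ is continuous and strictly increasing in $v$, negative at $v=0$ and positive at $\bar v$.

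Second, I show that $\sigma(v,\lambda)$ is strictly decreasing in $\lambda$ for fixed $v$. Take $\lambda_1<\lambda_2$ and let $b_2$ be a maximizer at $\lambda_2$. Then
\[
\sigma(v,\lambda_1) \ge v z(b_2) - \lambda_1(a+b_2) > v z(b_2) - \lambda_2(a+b_2) = \sigma(v,\lambda_2).
\]
The strict inequality uses $a+b_2>0$. This holds when $a>0$, and when $a=0$ it holds because any developed project has $\beta>0$. At $v$ near the threshold the maximizer is interior, since $z'(0^+)$ is large enough relative to $\lambda/v$ that the first-order condition has a solution, so $b_2>0$ there.

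Third, I combine the two facts. If $s$ decreases, then $\lambda$ rises from $\lambda_1$ to $\lambda_2$. This gives $\sigma(v_0(\lambda_1),\lambda_2) < \sigma(v_0(\lambda_1),\lambda_1) = 0$. Because $\sigma(\cdot,\lambda_2)$ is strictly increasing with root $v_0(\lambda_2)$, it follows that $v_0(\lambda_2) > v_0(\lambda_1)$. Hence decreasing $s$ increases $v_0$.

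The main obstacle is the strictness in the second step, namely making sure the optimal development time at the threshold is strictly positive. If $a>0$ this is immediate. Otherwise I will argue directly from the tangency characterization: at $v_0$ the line $\lambda t$ is tangent to $v_0 u(t)$ at some $t>a$. An alternative is to use the envelope theorem, which gives $\partial\sigma/\partial\lambda = -(a+\beta_0) < 0$.
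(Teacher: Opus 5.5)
Your argument is correct for $a>0$ and takes essentially the paper's route. You hold $v$ at the old threshold, use the fact that $s$ enters the post-discovery problem only through $\lambda$, show that the surplus there turns negative when $\lambda$ rises, and conclude that the threshold moves up.

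The one difference is how you get the sign. The paper uses the envelope theorem, $\partial\sigma(v_0(s),s)/\partial s=-\frac{d\lambda}{ds}(a+\beta_0)$. You use the discrete inequality $\sigma(v,\lambda_1)\ge vz(b_2)-\lambda_1(a+b_2)>\sigma(v,\lambda_2)$. That is a modest gain: you need only the strict monotonicity in Proposition~\ref{prop:l-alpha}, and your conclusion holds for any finite decrease in $s$. The derivative version needs $d\lambda/ds$ to exist and be strictly negative. Proposition~\ref{prop:l-alpha} alone does not give that, although an envelope computation like the one done for $a$ yields $d\lambda/ds=-\lambda/\rho$.

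Your patch for $a=0$ does not work. The strict inequality in your second step needs $a+b_2>0$, where $b_2$ is the maximizer at the new, larger $\lambda_2$, evaluated at $v=v_0(\lambda_1)$. That project is exactly one that is no longer worth developing, so with $a=0$ you get $b_2=0$.

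\begin{itemize}
\item Take $z'(0^+)<\infty$. The old threshold satisfies $v_0(\lambda_1)z'(0^+)=\lambda_1<\lambda_2$, so the first-order condition has no interior solution. Hence $\sigma(v_0(\lambda_1),\lambda_2)=0$, not $<0$.
\item The tangency you appeal to concerns the maximizer at $\lambda_1$, not at $\lambda_2$. Moreover, for $a=0$ the tangency occurs at $t=0=a$, not at some $t>a$.
\item Your envelope alternative also fails here, since it gives $-(a+\beta_0)=0$.
\end{itemize}

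With $a=0$ and finite $z'(0^+)$, concavity gives $\sigma(v,\lambda)=0$ for every $v\le\lambda/z'(0^+)$. So the threshold is not an isolated root, and you would have to handle it directly via $v_0=\lambda/z'(0^+)$. If instead $z'(0^+)=\infty$, then $v_0=0$ and it does not move with $s$ at all, so the strict statement fails.

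The paper's existence result for $v_0$ already relies on $\sigma(0)=-\lambda a<0$. The clean fix is therefore to state $a>0$ as a standing assumption and drop the $a=0$ discussion.
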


\begin{proof}

    Write the available surplus at $v$ for a given $s$ as
    
    \begin{displaymath}
    \sigma(v,s) = \max_{b \geq 0} \left[v \,z(b)-\lambda \, (a+b) \right]
    \end{displaymath}
    Let $v_0(s)$ be the threshold project under $s$, such that $\sigma(v_0(s),s)=0$.  We analyze for (the sign of)
    \begin{displaymath}
        \dfrac{\partial \sigma(v_0(s),s)}{\partial s} 
    \end{displaymath}
    which gives the rate of change of the surplus as $s$ increases but while $v$ stays fixed at $v_0(s)$. By the envelope theorem, 
    \begin{displaymath}
        \dfrac{\partial \sigma(v_0(s),s)}{\partial s} = -\dfrac{d\lambda}{ds}(a+\beta_0)>0.
    \end{displaymath}
    Therefore as $s$ increases the researcher now develops the project at which they had previously been indifferent and thus the threshold project for indifference must drop.
\end{proof}

\begin{proposition} For any $v>v_0$, decreasing $s$ decreases $z(\beta(v))$. \end{proposition}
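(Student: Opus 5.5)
The plan is to combine Proposition~\ref{prop:l-alpha} with the first-order condition for developed projects, and then use monotonicity of $z$. Fix $v$ and take $s_1 < s_2$. We need $v > v_0(s_2)$ so that $v$ is developed under the larger $s$. By the preceding proposition, $v_0(s_1) \geq v_0(s_2)$, so if $v > v_0(s_1)$ the project is developed under both parameter values. (If $v$ is developed at $s_2$ but abandoned at $s_1$, then $z(\beta(v))$ falls from a positive value to $0$, and the claim holds trivially. I will note this case separately.)

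For $v$ developed under both $s_i$, the optimal discretionary time $\beta_i(v) > 0$ is interior. It therefore satisfies the first-order condition used in the proof of monotonicity in $v$, namely
\[
v\, z'(\beta_i(v)) = \lambda(s_i), \qquad i=1,2.
\]
Proposition~\ref{prop:l-alpha} gives $\lambda(s_1) > \lambda(s_2)$, hence $z'(\beta_1(v)) > z'(\beta_2(v))$. Since $z(b)=u(a+b)$ and $u''<0$ for $t>a$, $z'$ is strictly decreasing on $b>0$. It follows that $\beta_1(v) < \beta_2(v)$. Strict monotonicity of $z$ (from $u'>0$ on $t>a$) then gives $z(\beta_1(v)) < z(\beta_2(v))$. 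That is, decreasing $s$ strictly decreases the thoroughness $z(\beta(v))$.

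The main point requiring care is justifying the interior first-order condition: showing that $\beta_i(v)$ really is a positive solution of $v z'(b)=\lambda$ rather than a corner at $b=0$. For this I would argue as follows. If $z'(0^+)\,v \leq \lambda$, the maximizer over $b\ge 0$ would be $b=0$, giving surplus $-\lambda a<0$ (or $0$ when $a=0$). The project would then not strictly exceed the threshold, contradicting $v>v_0$. Hence the maximum is attained at an interior point, where the derivative vanishes. Existence of that maximizer follows from $z$ being bounded by $1$ while $\lambda b \to \infty$.

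With interiority settled, the rest is the monotone comparison above.
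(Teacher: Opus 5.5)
Your proof is correct and follows the paper's route: it combines the first-order condition $v\,z'(\beta(v))=\lambda$, strict concavity of $z$, and the fact that $\lambda$ strictly decreases in $s$. Your justification of interiority and your separate treatment of projects that cross the rising threshold $v_0$ as $s$ falls are careful additions; the paper leaves both implicit.
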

\begin{proof}
    When $v>v_0$, $\beta(v)$ satisfies the first-order condition $v \, z'(\beta(v)) = \lambda$.  By the strict concavity of $z(b)$ for $b>0$, an increase in $\lambda$ decreases $\beta(v)$ for any given $v$, and thus decreases $z(\beta(v))$ as well.
\end{proof}

\subsubsection{Shortening the minimum development time}

\begin{proposition} Decreasing $a$ increases $\lambda$. \end{proposition}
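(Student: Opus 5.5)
I will mirror the policy-comparison argument used for Proposition~\ref{prop:l-alpha}, working in the reparameterized form with $z(b)=u(a+b)$. The essential observation is that, once a policy is written in terms of discretionary time $\beta(v)$, the parameter $a$ appears in the payoff only through the denominator. The function $z$ captures the shape of returns to discretionary effort and is held fixed as $a$ varies, so that changing $a$ shifts $u$ rightward or leftward. Thus for any fixed policy $\beta$,
\[
\Lambda_\beta(a) = \dfrac{\Exp{V\, z(\beta(V))\,\mathsf{1}_{\beta(V)>0}}}{s + \Exp{\mathsf{1}_{\beta(V)>0}\,(a+\beta(V))}},
\]
where the numerator does not depend on $a$. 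The denominator is increasing in $a$, and strictly so whenever the policy develops a set of projects of positive $F$-measure.

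Take $a_1<a_2$. Let $\beta_i$ be an optimal policy under $a_i$, and let $\lambda(a_i)$ be the corresponding optimal payoff. I would write
\[
\lambda(a_1)=\Lambda_{\beta_1}(a_1)\geq \Lambda_{\beta_2}(a_1) > \Lambda_{\beta_2}(a_2)=\lambda(a_2).
\]
The first inequality holds because $\beta_1$ is optimal at $a_1$ by the existence proposition. For the strict second inequality I need two facts. First, the numerator of $\Lambda_{\beta_2}$ is strictly positive; this follows from the proposition that $\lambda>0$. Second, $\beta_2$ develops a positive-measure set of projects. This follows from the interior-threshold proposition: $v_0(a_2)<\bar v$, and since $F$ is atomless with support $[0,\bar v]$, the interval $(v_0,\bar v]$ has positive mass. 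Together these give a strictly larger denominator at $a_2$ with the same positive numerator.

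The main obstacle is making this reparameterization legitimate, namely ensuring that the same policy $\beta_2$ is feasible and yields the same benefit when $a$ changes. Under the $z$-formulation this is immediate, because the benefit of developing with discretionary time $b$ is $v z(b)$ regardless of $a$. I will state explicitly that this is the convention under which ``decreasing $a$'' is interpreted, so that only the required-work cost changes and the returns to discretionary effort do not. I also need to keep track of the indicator for abandoned projects, so that the time $a$ is charged only to developed projects. With that in place, the strict monotonicity of $\lambda$ in $a$ follows exactly as in Proposition~\ref{prop:l-alpha}, and hence decreasing $a$ strictly increases $\lambda$.
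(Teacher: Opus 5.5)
Your argument is correct and is the policy-comparison route the paper itself cites for this proposition (``simply replace $s$ with $a$ in the proof of Proposition~\ref{prop:l-alpha}''). You add useful care: you hold $z$ fixed so the numerator does not depend on $a$, and you check that $\beta_2$ develops a positive-measure set so the second inequality is strict. The paper's displayed proof additionally derives $d\lambda/da=-\lambda/T_d$ via the envelope theorem. Your comparison argument does not give that formula, but the paper needs it only for the subsequent threshold proposition, not for this statement.
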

\begin{proof}
One can prove this proposition by applying an argument identical to the proof of Proposition \ref{prop:l-alpha}.  Simply replace $s$ with $a$ in the proof of Proposition \ref{prop:l-alpha}. But because we will need it for the next proposition, here we explicitly derive $d\lambda/da$.

Writing $\lambda = \mu/\rho$, we apply the envelope theorem to obtain
    \begin{equation}
        \dfrac{d\lambda}{da} =  \dfrac{-\mu\left(\frac{\partial\rho}{\partial a}\right)}{\rho^2} = -\dfrac{\lambda}{\rho}\dfrac{\partial \rho}{\partial a}.
    \end{equation}
    Now $\partial \rho/\partial a$ is the first-order effect of an increase in $a$ on the average project length, which equals the fraction of projects that are developed:
    \begin{displaymath}
      \dfrac{\partial \rho}{\partial a}  = \dfrac{\partial }{\partial a}\left[ s + \int_{v_0}^{\bar{v}} (a + \beta(v)) \, dF(v)\right] = 1 - F(v_0).
    \end{displaymath}
    Thus
     \begin{equation}
        \dfrac{d\lambda}{da} =  -\dfrac{\lambda}{\rho}[1-F(v_0)].
    \end{equation}
    Here $\lambda$, $\rho$, and $1-F(V_0)$ are all strictly positive and so the derivative is negative.  However, we can go further.  Observe that $\rho / (1 - F(v_0)) = s / (1 - F(v_0)) + \Exp{a+\beta(v)|v>v_0}$ gives the average time in between the completion of developed projects.  Write this time as $\rho / (1 - F(v_0))=T_d$ and substitute into the expression above to find
    \begin{equation}
        \dfrac{d\lambda}{da} =  -\dfrac{\lambda}{T_d} < 0.
    \end{equation}
\end{proof}

\begin{proposition} Increasing $a$ increases $v_0$. \end{proposition}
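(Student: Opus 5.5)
We follow the strategy of the proof that decreasing $s$ increases $v_0$. The idea is to hold the project fixed at the current threshold $v_0(a)$, where $\sigma(v_0(a),a)=0$, and compute the total derivative of the development surplus with respect to $a$. This derivative includes the induced change in $\lambda$. If it is negative, then raising $a$ makes the formerly indifferent project strictly unprofitable. Since $\sigma(\cdot,a)$ is continuous and strictly increasing in $v$ (as used in the proposition establishing the existence and uniqueness of $v_0$), the new threshold must then lie above the old one.

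Concretely, write
\[
\sigma(v,a)=\max_{b\ge 0}\left[v\,z(b)-\lambda(a)\,(a+b)\right].
\]
At $v=v_0$ the maximizer is $\beta_0>0$, so the envelope theorem gives
\[
\frac{\partial \sigma(v_0,a)}{\partial a} = -\lambda - \frac{d\lambda}{da}\,(a+\beta_0).
\]
The first term is the direct effect: one more unit of required work, valued at the opportunity cost $\lambda$. The second term is the indirect effect: a lower $\lambda$ makes time cheaper. Substituting the expression from the previous proposition, $d\lambda/da=-\lambda/T_d$, yields
\[
\frac{\partial \sigma(v_0,a)}{\partial a} = -\lambda\left(1-\frac{a+\beta_0}{T_d}\right).
\]
Since $\lambda>0$ (the proposition that the payoff is strictly positive), the sign of this derivative is decided by comparing $a+\beta_0$ with $T_d$.

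The main step is to show $a+\beta_0<T_d$. Recall that $T_d = s/(1-F(v_0)) + \Exp{a+\beta(V)\mid V>v_0}$.

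\begin{itemize}
\item Every developed project has $v>v_0$. By the proposition that development time strictly increases in $v$, each such project has $\beta(v)>\beta_0$. Because $F$ is atomless and the threshold is interior, the conditional expectation is at least $a+\beta_0$.
\item The discovery term $s/(1-F(v_0))$ is strictly positive because $s>0$.
\end{itemize}

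Together these give $T_d>a+\beta_0$, so the derivative is strictly negative. Intuitively, the threshold project is the shortest developed project, while $T_d$ also contains discovery time and longer developments. The indirect gain from a lower $\lambda$ therefore cannot offset the direct cost.

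The delicate point is justifying the envelope theorem across the threshold. Two things must be checked:

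\begin{itemize}
\item The change in $\lambda$ is well defined. The previous proposition already handled this, since the policy's boundary contributes zero at the margin: $\sigma(v_0)=0$, so shifting $v_0$ leaves $\mu-\lambda\rho$ unchanged to first order.
\item The interior first-order condition $v_0 z'(\beta_0)=\lambda$ holds, so that only partial derivatives matter.
\end{itemize}

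Given these, strict monotonicity of $\sigma$ in $v$ converts the negative sign into $dv_0/da>0$. Explicitly, $dv_0/da = -(\partial\sigma/\partial a)/(\partial\sigma/\partial v)$ with $\partial\sigma/\partial v = z(\beta_0)>0$.
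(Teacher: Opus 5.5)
Your proposal is correct and follows the paper's proof essentially step for step: you hold the project at $v_0(a)$, take the envelope-theorem derivative $-\lambda-\frac{d\lambda}{da}(a+\beta_0)$, substitute $d\lambda/da=-\lambda/T_d$ from the previous proposition, and conclude from $a+\beta_0<T_d$. Your explicit split of $T_d$ into the strictly positive discovery term $s/(1-F(v_0))$ plus a conditional development time of at least $a+\beta_0$ spells out the paper's one-line remark that the threshold project is the minimally developed one.
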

\begin{proof}

    Write the available surplus at $v$ for a given $a$ as
    \begin{displaymath}
    \sigma(v,a) = \max_{b \geq 0} \left[v \,z(b)-\lambda \, (a+b) \right]
    \end{displaymath}
    Let $v_0(a)$ be the threshold project under $a$, such that $\sigma(v_0(a),a)=0$. 
    We analyze for the sign of
    \begin{displaymath}
        \dfrac{\partial \sigma(v_0(a),a)}{\partial a} 
    \end{displaymath}
    which gives the rate of change of the surplus as $a$ increases but while $v$ stays fixed at $v_0(a)$.

   Apply the envelope theorem to find
    \begin{displaymath}
        \dfrac{\partial\sigma(v_0(a),a)}{\partial a} = -\lambda -\dfrac{d\lambda}{da}(a + \beta_0).
    \end{displaymath}
    Plug in ${d\lambda}/{da} = - \lambda / T_d$ from Proposition 8 into the above to yield 
    \begin{align*}
        \dfrac{\partial\sigma(v_0(a),a)}{\partial a} & =  -\lambda + \dfrac{\lambda(a + \beta_0)}{T_d} = -\lambda\left( 1 - \dfrac{a + \beta_0}{T_d}\right).
    \end{align*}
    But $a+\beta_0<T_d$, because $a+\beta_0$ is the development time given to the minimally developed project.  Thus $\partial \sigma(v_0(a),a) /\partial a<0$. Therefore as $a$ increases the researcher now abandons the project at which they had previously been indifferent and thus the threshold project for indifference must rise.
\end{proof}

\begin{proposition} For any $v>v_0$, decreasing $a$ decreases $z(\beta(v))$. \end{proposition}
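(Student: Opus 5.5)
The plan mirrors the proof of the analogous proposition for shortening the discovery phase, with the proposition ``Decreasing $a$ increases $\lambda$'' playing the role of Proposition \ref{prop:l-alpha}. The central observation is that in the reparameterized problem, $a$ enters the researcher's objective $v\,z(b)-\lambda(a+b)$ only through the additive constant $-\lambda a$. This constant does not depend on $b$. Hence, once a project is developed, the interior choice of $\beta(v)$ depends on $a$ only through the equilibrium opportunity cost $\lambda$. In particular, $z(b)=u(a+b)$ is a fixed function of $b$ for purposes of the first-order condition, and we treat $z$ as the primitive whose shape is held fixed as $a$ varies.

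The steps run as follows. First, fix $v>v_0$ and suppose $v$ remains above the threshold after the change in $a$. Since $\beta(v)>0$ for developed projects and $z$ is differentiable with $z'>0$ and $z''<0$ on $b>0$, the optimum satisfies the first-order condition $v\,z'(\beta(v))=\lambda$. Second, invoke the previous result, $d\lambda/da=-\lambda/T_d<0$: decreasing $a$ strictly increases $\lambda$. Third, because $z'$ is strictly decreasing, $\beta(v)=(z')^{-1}(\lambda/v)$ is strictly decreasing in $\lambda$. So a decrease in $a$ strictly decreases $\beta(v)$. Fourth, since $z$ is strictly increasing on $b>0$, $z(\beta(v))$ also strictly decreases.

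The one point needing care, and the main obstacle, is the set of developed projects. By the preceding proposition, decreasing $a$ lowers $v_0$. Any $v$ exceeding the original threshold therefore also exceeds the new one, and it stays in the interior regime where the first-order condition holds. Projects between the new and old thresholds move from abandoned to developed. Their thoroughness rises from $0$, but they are not covered by the statement, which concerns a fixed $v>v_0$. I would state this explicitly so the comparison is between two interior optima.

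I would also check that the first-order condition has a solution $\lambda/v$ in the range of $z'$ on $(0,\infty)$. This holds because $v>v_0$ implies $\beta(v)>0$, and the optimum cannot be at infinity since $z\to1$ while the cost grows linearly.
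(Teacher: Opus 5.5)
Your proposal is correct and follows the paper's own argument. Both use the first-order condition $v\,z'(\beta(v))=\lambda$, the earlier result that decreasing $a$ raises $\lambda$, and the strict concavity and monotonicity of $z$. Your added check that a decrease in $a$ lowers $v_0$, so that $v$ stays in the developed regime, is a point the paper leaves implicit.
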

\begin{proof}
    The first-order condition for $\beta(v)$ for $v>v_0$ is $v \, z'(\beta(v)) = \lambda$.  Thus $\beta(v)$ and also $z(\beta(v))$ must decrease as $\lambda$ increases.
\end{proof}

\subsubsection{Accelerating discretionary development}

\begin{proposition} In the acceleration model, if the elasticity of $z(b)$ is decreasing in $b$, then increasing $\alpha$ decreases $v_0$. \end{proposition}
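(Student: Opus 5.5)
Throughout I take the acceleration model to mean that discretionary development is sped up by a factor $\alpha$: the reparameterized payoff becomes $z(\alpha b)$, with $\alpha=1$ the base case. I read the elasticity hypothesis as the statement that $\varepsilon(x) = x z'(x)/z(x)$ is decreasing in $x$. The plan mirrors the proofs for $s$ and $a$. Write
\[
\sigma(v,\alpha)=\max_{b\ge 0}\left[v\,z(\alpha b)-\lambda(a+b)\right],
\]
let $v_0(\alpha)$ solve $\sigma(v_0(\alpha),\alpha)=0$, and determine the sign of $\partial\sigma(v_0(\alpha),\alpha)/\partial\alpha$ with $v$ held fixed at the threshold. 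If this sign is positive, then after $\alpha$ increases the previously indifferent project is strictly worth developing. Since $\sigma$ is increasing in $v$, the threshold must then fall.

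By the envelope theorem,
\[
\frac{\partial\sigma(v_0,\alpha)}{\partial\alpha}=v_0 z'(\alpha\beta_0)\beta_0-\frac{d\lambda}{d\alpha}(a+\beta_0).
\]
To compute $d\lambda/d\alpha$, I write $\lambda=\mu/\rho$ and apply the envelope theorem as in the proof for $a$. Here $\alpha$ enters only $\mu$, so
\[
\frac{d\lambda}{d\alpha}=\frac{1}{\rho}\,\Exp{\mathsf{1}_{V>v_0}\,V z'(\alpha\beta(V))\beta(V)}.
\]
The key move is to rewrite everything in terms of elasticities. Using $z'(\alpha\beta)\beta=\varepsilon(\alpha\beta)\,z(\alpha\beta)/\alpha$, the expected term becomes $\Exp{V z(\alpha\beta)\varepsilon(\alpha\beta)}/\alpha$, where abandoned projects contribute zero. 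At the threshold, the indifference condition $v_0 z(\alpha\beta_0)=\lambda(a+\beta_0)$ gives
\[
v_0 z'(\alpha\beta_0)\beta_0=\frac{\varepsilon(\alpha\beta_0)\,\lambda(a+\beta_0)}{\alpha}.
\]
Substituting both expressions and using $\lambda=\mu/\rho$ with $\mu=\Exp{Vz(\alpha\beta(V))}$, I expect to obtain
\[
\frac{\partial\sigma(v_0,\alpha)}{\partial\alpha}=\frac{\lambda(a+\beta_0)}{\alpha}\left[\varepsilon(\alpha\beta_0)-\frac{\Exp{V z(\alpha\beta(V))\,\varepsilon(\alpha\beta(V))}}{\Exp{V z(\alpha\beta(V))}}\right].
\]
The second term in brackets is a weighted average of $\varepsilon(\alpha\beta(v))$ over developed projects, with weights $v z(\alpha\beta(v))\,dF(v)$.

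To finish, I use the earlier result that $\beta(v)$ strictly increases in $v$ for $v>v_0$, so $\beta(v)>\beta_0$ for every developed project. Because $\varepsilon$ is decreasing, $\varepsilon(\alpha\beta(v))\le\varepsilon(\alpha\beta_0)$, hence the weighted average is at most $\varepsilon(\alpha\beta_0)$. Strictness comes from $F$ being atomless with positive mass above the interior threshold $v_0$, provided $\varepsilon$ is strictly decreasing; if it is only weakly decreasing, the conclusion weakens to a non-increasing $v_0$. Since $\lambda>0$, the derivative is positive, and so $v_0$ decreases in $\alpha$.

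The main obstacle is the rewriting step. The naive comparison of $\beta_0/(a+\beta_0)$ with averages of $\beta/(a+\beta)$ points the wrong way, since higher-value projects devote a larger share of their time to discretionary work. The elasticity substitution is what reverses this. It depends on the threshold identity $\beta_0$-elasticity $=\alpha\beta_0 z'(\alpha\beta_0)/z(\alpha\beta_0)$, obtained by combining the first-order condition $v_0\alpha z'(\alpha\beta_0)=\lambda$ with indifference. It also requires checking that the envelope theorem applies to $\lambda$ despite the moving threshold $v_0$; this holds because the marginal project has zero surplus.
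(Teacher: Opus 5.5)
Your proposal is correct and follows the paper's proof essentially step for step. It uses the same envelope-theorem expression for $\partial\sigma(v_0,\alpha)/\partial\alpha$, the same $d\lambda/d\alpha=\Exp{V\beta(V)z'(\alpha\beta(V))}/\rho$, and the same rewriting into a comparison of the threshold elasticity with a benefit-weighted average of elasticities over developed projects. Your prefactor $\lambda(a+\beta_0)/\alpha$ is the paper's $v_0 z(\alpha\beta_0)$ combined with the $1/\alpha$ absorbed into its $q_\alpha=\varepsilon_z/\alpha$. Your remark that strictness needs $\varepsilon$ strictly decreasing (otherwise only a non-increasing $v_0$ follows) is a fair refinement of a point the paper passes over.
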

\begin{proof}
    In the acceleration model, a researcher who has learned $v$ faces the problem of choosing $b$ to maximize $\left\{0, v z(\alpha b)-\lambda(a+b)\right\}$. Write the available surplus at $v$ for a given $\alpha$ as
    \begin{displaymath}
    \sigma(v,\alpha) = \max_{b \geq 0} \left[v \,z(\alpha b)-\lambda \, (a+b) \right]
    \end{displaymath}
    Let $v_0(\alpha)$ be the threshold project under $\alpha$, such that $\sigma(v_0(\alpha),\alpha)=0$. Our aim will be to determine the sign of $\partial \sigma(v_0(\alpha),\alpha )/ {\partial \alpha} $
    which gives the rate of change of the surplus as $\alpha$ increases but while $v$ stays fixed at $v_0(\alpha)$.

    By the envelope theorem,
    \begin{displaymath}
        \dfrac{\partial \sigma(v_0(\alpha),\alpha)}{\partial \alpha} = v_0 \beta_0 z'(\alpha \beta_0)-\dfrac{d\lambda}{d\alpha}(a+\beta_0).
    \end{displaymath}

    To find $d\lambda/d\alpha$, we appeal to the envelope theorem a second time. Write $\mu = \Exp{V \, z(\alpha \beta(V))}$ as the researcher's average benefit per project, and continue to write $\rho$ as the average time investment per project, such that $\lambda = \mu/\rho$. Apply the envelope theorem to obtain
    \begin{equation*}
        \dfrac{d\lambda}{d\alpha} =  \dfrac{\Exp{V \, \beta(V) \, z'(\alpha \beta(V))}}{\rho}.
    \end{equation*}
    Plug in to yield 
    \begin{displaymath}
        \dfrac{\partial \sigma(v_0(\alpha),\alpha)}{\partial \alpha} = v_0 \beta_0 z'(\alpha \beta_0)- \Exp{V \, \beta(v) \, z'(\alpha \beta(V))} \dfrac{a+\beta_0}{\rho}.
    \end{displaymath}
    To rewrite the final term of this expression we use the definition of the long-run payoff and fact that at $v_0$ the surplus is 0:
    \begin{displaymath}
         \dfrac{\Exp{V z(\alpha \beta(V))}}{\rho}=\lambda = \dfrac{v_0 z(\alpha \beta_0)}{a+\beta_0}.
    \end{displaymath}
    Thus $(a+\beta_0)/\rho = v_0 z(\alpha \beta_0)/\Exp{V z(\alpha \beta(V))}$, and we can write
    \begin{displaymath}
        \dfrac{\partial \sigma(v_0(\alpha),\alpha)}{\partial \alpha} = v_0 \beta_0 z'(\alpha \beta_0)- \Exp{V \, \beta(V) \, z'(\alpha \beta(V))} \dfrac{v_0 z(\alpha \beta_0)}{\Exp{V z(\alpha \beta(V))}}.
    \end{displaymath}
    Let $\varepsilon_z$ be the elasticity of $z$ and for $b>0$ write $q_\alpha$ as
    \begin{displaymath}
         q_\alpha(b)
      =
      \frac{b z'(\alpha b)}{z(\alpha b)}
      =
     \frac{1}{\alpha}\varepsilon_z(\alpha b).
    \end{displaymath}
    This gives us 
    \begin{displaymath}
\frac{\partial \sigma(v_0(\alpha),\alpha)}{\partial \alpha}
=
v_0 z(\alpha \beta_0)
\left[
q_\alpha(\beta_0)
-
\frac{
\mathbb{E}\!\left[
V z(\alpha \beta(V)) q_\alpha(\beta(V))
\right]
}{
\mathbb{E}\!\left[
V z(\alpha \beta(V))
\right]
}
\right],
\end{displaymath}
The term in square brackets compares the elasticity at the threshold project with a benefit-weighted average of elasticities across developed projects. Because the elasticity $\varepsilon_z$ is decreasing and $\beta(V)>\beta_0$ for every $V>v_0$,
$q_\alpha(\beta_0)$ exceeds this benefit-weighted average. Therefore 
\begin{displaymath}
\frac{\partial \sigma(v_0(\alpha),\alpha)}{\partial \alpha}>0.
\end{displaymath}

 As $\alpha$ increases, the previous threshold project $v_0$ returns a positive payoff and the new threshold project value $v_0$ must decrease.

    \end{proof}
    
\begin{proposition} For any $v>v_0$, increasing $\alpha$ in the acceleration model increases $z(\alpha \beta(v))$. \end{proposition}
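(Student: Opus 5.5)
The plan is to work with the \emph{effective} development level $y(v)=\alpha\beta(v)$ rather than with $\beta(v)$ itself. The reason is that the thoroughness $z(\alpha\beta(v))=z(y(v))$ depends on $\alpha$ and $\beta$ only through $y$. For a developed project ($v>v_0$), the researcher maximizes $vz(\alpha b)-\lambda(a+b)$ over $b>0$. The first-order condition is $v\alpha z'(\alpha\beta(v))=\lambda$, which we rewrite as
\[
z'(y(v)) = \frac{\lambda}{\alpha v}.
\]
By strict concavity of $z$ for $b>0$, $z'$ is strictly decreasing. Hence $y(v)$ is a strictly decreasing function of the ratio $\lambda/\alpha$. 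The whole claim therefore reduces to showing that $\lambda/\alpha$ strictly decreases in $\alpha$. Once $y(v)$ rises, strict monotonicity of $z$ gives the conclusion that $z(\alpha\beta(v))$ rises.

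\textbf{Step 1: derivative of $\lambda$.} I will reuse the envelope-theorem expression for $d\lambda/d\alpha$ derived in the proof of the preceding proposition:
\[
\frac{d\lambda}{d\alpha}=\frac{\Exp{V\,\beta(V)\,z'(\alpha\beta(V))}}{\rho}.
\]
The integrand vanishes on abandoned projects, since there $\beta=0$. On developed projects the first-order condition gives $Vz'(\alpha\beta(V))=\lambda/\alpha$. Substituting yields
\[
\frac{d\lambda}{d\alpha}=\frac{\lambda}{\alpha}\cdot\frac{\Exp{\mathsf{1}_{V\geq v_0}\,\beta(V)}}{\rho}.
\]

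\textbf{Step 2: compare with $\lambda/\alpha$.} Recall $\rho=s+\Exp{\mathsf{1}_{V\geq v_0}(a+\beta(V))}$, with $s>0$ and $a\geq 0$. Therefore
\[
\Exp{\mathsf{1}_{V\geq v_0}\beta(V)}<\rho,
\]
and so $d\lambda/d\alpha<\lambda/\alpha$. Since $\lambda>0$ by the earlier positivity proposition, this gives
\[
\frac{d}{d\alpha}\left(\frac{\lambda}{\alpha}\right)=\frac{1}{\alpha}\left(\frac{d\lambda}{d\alpha}-\frac{\lambda}{\alpha}\right)<0.
\]
Intuitively, acceleration raises the opportunity cost of time, but by less than it raises the productivity of discretionary time. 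The gap exists because part of every cycle, namely the discovery time $s$ and the required work $a$, is not accelerated.

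\textbf{Main obstacle.} The delicate point is justifying the envelope-theorem formula for $d\lambda/d\alpha$, that is, differentiability of $\lambda$ in $\alpha$. If I wanted to avoid relying on it, I would use a revealed-preference argument instead. Fix the optimal policy at $\alpha_1$ and evaluate its payoff at $\alpha_2>\alpha_1$; this bounds $\lambda(\alpha_2)$ from below. Then use the bound $z(\alpha_2 b)\le z(\alpha_1 b)+(\alpha_2-\alpha_1)b\,z'(\alpha_1 b)$ in a symmetric argument to bound $\lambda(\alpha_2)$ from above. In the limit these bounds recover the same inequality.

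A second subtlety is the meaning of ``for any $v>v_0$'' when $v_0$ itself moves with $\alpha$. I would state the result for $v$ that remain developed after the change; this is a local statement in $\alpha$ for $v$ strictly above $v_0$. For such $v$ the first-order condition applies at both values of $\alpha$, so the comparison above is valid.
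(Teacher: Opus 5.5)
Your proposal is correct and is essentially the paper's proof: the first-order condition $v\alpha z'(\alpha\beta(v))=\lambda$ reduces the claim to $\lambda/\alpha$ decreasing in $\alpha$, and substituting that condition into the envelope formula for $d\lambda/d\alpha$ gives $\frac{d}{d\alpha}(\lambda/\alpha)=\frac{\lambda}{\alpha^2}\bigl(\Exp{\beta(V)}/\rho-1\bigr)<0$ because $s>0$. Your two caveats, differentiability of $\lambda$ and restricting to $v$ that stay developed as $v_0$ moves, are points the paper passes over. For a derivative-free backup there is a cleaner route than your two-sided limiting sketch: run the $\alpha_2$-optimal policy at $\alpha_1<\alpha_2$ with every discretionary time multiplied by $k=\alpha_2/\alpha_1$. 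This yields the same expected benefit in an expected cycle time strictly less than $k$ times the original (since $s>0$), so $\lambda(\alpha_1)>\lambda(\alpha_2)/k$ directly.
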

\begin{proof}
    For any $v>v_0$ in the acceleration model, the first-order condition for each project's optimal development time is $v \alpha z'(\alpha \beta(v)) = \lambda$, which rearranges to $v z'(\alpha \beta(v)) = \lambda/\alpha$.  Thus, $z'(\alpha \beta(v))$ increases with $\alpha$ iff $\lambda/\alpha$ decreases with $\alpha$.  By simple calculus,
    \begin{displaymath}
        \dfrac{d}{d\alpha}  \left( \dfrac{\lambda}{\alpha} \right)= \dfrac{1}{\alpha}\left(\dfrac{d\lambda}{d\alpha} - \dfrac{\lambda}{\alpha}\right).
    \end{displaymath}
    By using our previous result for $d\lambda/d\alpha$ and applying the first-order condition, we have
    \begin{displaymath}
        \dfrac{d\lambda}{d\alpha} = \dfrac{\Exp{V \, \beta(V) \, z'(\alpha \beta(V))}}{\rho} = \dfrac{\lambda \Exp{\beta(V)}}{\alpha \rho}
    \end{displaymath}
    Plug in  to obtain
    \begin{displaymath}
        \dfrac{d}{d\alpha}  \left( \dfrac{\lambda}{\alpha} \right)= \dfrac{\lambda}{\alpha^2}\left(\dfrac{\Exp{\beta(V)}}{\rho} - 1\right).
    \end{displaymath}
    But $\Exp{\beta(V)}$ (the average discretionary time per project, developed or not) is certainly less than $\rho$ (the average total time per project), and thus $d(\lambda/\alpha)/d\alpha < 0$.
\end{proof}

\bibliographystyle{alpha}
\bibliography{bibliography}

@article{mertonPrioritiesScientificDiscovery1957,
    title = {Priorities in {Scientific} {Discovery}: {A} {Chapter} in the {Sociology} of {Science}},
    volume = {22},
    issn = {00031224},
    shorttitle = {Priorities in {Scientific} {Discovery}},
    url = {http://www.jstor.org/stable/2089193?origin=crossref},
    doi = {10.2307/2089193},
    number = {6},
    urldate = {2026-01-12},
    journal = {American Sociological Review},
    author = {Merton, Robert K.},
    month = dec,
    year = {1957},
    pages = {635},
}

@article{strevens_role_2003,
    title = {The {Role} of the {Priority} {Rule} in {Science}},
    volume = {100},
    url = {https://www.pdcnet.org/pdc/bvdb.nsf/purchase?openform&fp=jphil&id=jphil_2003_0100_0002_0055_0079},
    doi = {10.5840/jphil2003100224},
    language = {en},
    number = {2},
    urldate = {2026-01-12},
    journal = {The Journal of Philosophy},
    author = {Strevens, Michael},
    month = feb,
    year = {2003},
    pages = {55--79},
}

@article{zollman2009optimal,
  title={Optimal publishing strategies},
  author={Zollman, Kevin JS},
  journal={Episteme},
  volume={6},
  number={2},
  pages={185--199},
  year={2009},
  publisher={Cambridge University Press}
}

@article{dasgupta1994toward,
  title={Toward a new economics of science},
  author={Dasgupta, Partha and David, Paul A},
  journal={Research policy},
  volume={23},
  number={5},
  pages={487--521},
  year={1994},
  publisher={Elsevier}
}

@article{kitcher1990,
  title={The division of cognitive labor},
  author={Kitcher, Philip},
  journal={The Journal of Philosophy},
  volume={87},
  number={1},
  pages={5--22},
  year={1990}
}

@BOOK{hull2010science,
  title = {Science as a process: an evolutionary account of the social and conceptual
	development of science},
  publisher = {University of Chicago Press},
  year = {2010},
  author = {Hull, David L}
}

@article{messeri2024artificial,
  title={Artificial intelligence and illusions of understanding in scientific research},
  author={Messeri, Lisa and Crockett, MJ},
  journal={Nature},
  volume={627},
  number={8002},
  pages={49--58},
  year={2024},
  publisher={Nature Publishing Group UK London}
}

@article{binz2025should,
  title={How should the advancement of large language models affect the practice of science?},
  author={Binz, Marcel and Alaniz, Stephan and Roskies, Adina and Aczel, Balazs and Bergstrom, Carl T and Allen, Colin and Schad, Daniel and Wulff, Dirk and West, Jevin D and Zhang, Qiong and others},
  journal={Proceedings of the National Academy of Sciences},
  volume={122},
  number={5},
  pages={e2401227121},
  year={2025},
  publisher={National Academy of Sciences}
}

@article{ji2023survey,
  title={Survey of hallucination in natural language generation},
  author={Ji, Ziwei and Lee, Nayeon and Frieske, Rita and Yu, Tiezheng and Su, Dan and Xu, Yan and Ishii, Etsuko and Bang, Ye Jin and Madotto, Andrea and Fung, Pascale},
  journal={ACM computing surveys},
  volume={55},
  number={12},
  pages={1--38},
  year={2023},
  publisher={ACM New York, NY}
}

@inproceedings{si2025can,
  title={Can llms generate novel research ideas? a large-scale human study with 100+ nlp researchers},
  author={Si, Chenglei and Yang, Diyi and Hashimoto, Tatsunori},
  booktitle={International Conference on Learning Representations},
  volume={2025},
  pages={94003--94092},
  year={2025}
}

@inproceedings{zhou2024hypothesis,
  title={Hypothesis generation with large language models},
  author={Zhou, Yangqiaoyu and Liu, Haokun and Srivastava, Tejes and Mei, Hongyuan and Tan, Chenhao},
  booktitle={Proceedings of the 1st Workshop on NLP for Science (NLP4Science)},
  pages={117--139},
  year={2024}
}

@article{boiko2023autonomous,
  title={Autonomous chemical research with large language models},
  author={Boiko, Daniil A and MacKnight, Robert and Kline, Ben and Gomes, Gabe},
  journal={Nature},
  volume={624},
  number={7992},
  pages={570--578},
  year={2023},
  publisher={Nature Publishing Group UK London}
}

@inproceedings{majumder2025discoverybench,
  title={Discoverybench: Towards data-driven discovery with large language models},
  author={Majumder, Bodhisattwa Prasad and Surana, Harshit and Agarwal, Dhruv and Dalvi Mishra, Bhavana and Meena, Abhijeetsingh and Prakhar, Aryan and Vora, Tirth and Khot, Tushar and Sabharwal, Ashish and Clark, Peter},
  booktitle={International Conference on Learning Representations},
  volume={2025},
  pages={4556--4579},
  year={2025}
}

@article{wang2024autosurvey,
  title={Autosurvey: Large language models can automatically write surveys},
  author={Wang, Yidong and Guo, Qi and Yao, Wenjin and Zhang, Hongbo and Zhang, Xin and Wu, Zhen and Zhang, Meishan and Dai, Xinyu and Zhang, Min and Wen, Qingsong and others},
  journal={Advances in neural information processing systems},
  volume={37},
  pages={115119--115145},
  year={2024}
}

@article{asai2026synthesizing,
  title={Synthesizing scientific literature with retrieval-augmented language models},
  author={Asai, Akari and He, Jacqueline and Shao, Rulin and Shi, Weijia and Singh, Amanpreet and Chang, Joseph Chee and Lo, Kyle and Soldaini, Luca and Feldman, Sergey and D’Arcy, Mike and others},
  journal={Nature},
  pages={1--7},
  year={2026},
  publisher={Nature Publishing Group UK London}
}

@article{liang2025quantifying,
  title={Quantifying large language model usage in scientific papers},
  author={Liang, Weixin and Zhang, Yaohui and Wu, Zhengxuan and Lepp, Haley and Ji, Wenlong and Zhao, Xuandong and Cao, Hancheng and Liu, Sheng and He, Siyu and Huang, Zhi and others},
  journal={Nature Human Behaviour},
  pages={1--11},
  year={2025},
  publisher={Nature Publishing Group UK London}
}

@article{lu2024ai,
  title={The ai scientist: Towards fully automated open-ended scientific discovery},
  author={Lu, Chris and Lu, Cong and Lange, Robert Tjarko and Foerster, Jakob and Clune, Jeff and Ha, David},
  journal={arXiv preprint arXiv:2408.06292},
  year={2024}
}

@inproceedings{gu2024blade,
  title={Blade: Benchmarking language model agents for data-driven science},
  author={Gu, Ken and Shang, Ruoxi and Jiang, Ruien and Kuang, Keying and Lin, Richard-John and Lyu, Donghe and Mao, Yue and Pan, Youran and Wu, Teng and Yu, Jiaqian and others},
  booktitle={Findings of the Association for Computational Linguistics: EMNLP 2024},
  pages={13936--13971},
  year={2024}
}

@article{huang2025survey,
  title={A survey on hallucination in large language models: Principles, taxonomy, challenges, and open questions},
  author={Huang, Lei and Yu, Weijiang and Ma, Weitao and Zhong, Weihong and Feng, Zhangyin and Wang, Haotian and Chen, Qianglong and Peng, Weihua and Feng, Xiaocheng and Qin, Bing and others},
  journal={ACM Transactions on Information Systems},
  volume={43},
  number={2},
  pages={1--55},
  year={2025},
  publisher={ACM New York, NY}
}

@inproceedings{lin2022truthfulqa,
  title={Truthfulqa: Measuring how models mimic human falsehoods},
  author={Lin, Stephanie and Hilton, Jacob and Evans, Owain},
  booktitle={Proceedings of the 60th annual meeting of the association for computational linguistics (volume 1: long papers)},
  pages={3214--3252},
  year={2022}
}

@article{walters2023fabrication,
  title={Fabrication and errors in the bibliographic citations generated by ChatGPT},
  author={Walters, William H and Wilder, Esther Isabelle},
  journal={Scientific Reports},
  volume={13},
  number={1},
  pages={14045},
  year={2023},
  publisher={Nature Publishing Group UK London}
}

@inproceedings{sharma2024towards,
  title={Towards understanding sycophancy in language models},
  author={Sharma, Mrinank and Tong, Meg and Korbak, Tomek and Duvenaud, David and Askell, Amanda and Bowman, Sam and Durmus, Esin and Hatfield-Dodds, Zac and Johnston, Scott and Kravec, Shauna and others},
  booktitle={International Conference on Learning Representations},
  volume={2024},
  pages={110--144},
  year={2024}
}

@article{turpin2023language,
  title={Language models don't always say what they think: Unfaithful explanations in chain-of-thought prompting},
  author={Turpin, Miles and Michael, Julian and Perez, Ethan and Bowman, Samuel},
  journal={Advances in Neural Information Processing Systems},
  volume={36},
  pages={74952--74965},
  year={2023}
}

@article{siler2026diffusion,
  title={The diffusion of large language models in published academic articles},
  author={Siler, Kyle},
  journal={Proceedings of the National Academy of Sciences},
  volume={123},
  number={22},
  pages={e2605754123},
  year={2026},
  publisher={National Academy of Sciences}
}

@article{zhao2026llm,
  title={LLM hallucinations in the wild: Large-scale evidence from non-existent citations},
  author={Zhao, Zhenyue and Wang, Yihe and Stuart, Toby and De Vaan, Mathijs and Ginsparg, Paul and Yin, Yian},
  journal={arXiv preprint arXiv:2605.07723},
  year={2026}
}

@book{ross2014introduction,
  author    = {Ross, Sheldon M.},
  title     = {Introduction to Probability Models},
  edition   = {11},
  publisher = {Academic Press},
  address   = {Amsterdam},
  year      = {2014},
  isbn      = {9780124081215},
  note      = {Chapter 7: Renewal Theory and Its Applications}
}

@article{peters2025generalization,
  author  = {Peters, Uwe and Chin-Yee, Benjamin},
  title   = {Generalization bias in large language model summarization of scientific research},
  journal = {Royal Society Open Science},
  year    = {2025},
  volume  = {12},
  number  = {4},
  pages   = {241776},
  doi     = {10.1098/rsos.241776}
}

@article{charnov1976optimal,
  title={Optimal foraging, the marginal value theorem},
  author={Charnov, Eric L},
  journal={Theoretical population biology},
  volume={9},
  number={2},
  pages={129--136},
  year={1976},
  publisher={Elsevier}
}

@article{budzyn2025endoscopist,
  title={Endoscopist deskilling risk after exposure to artificial intelligence in colonoscopy: a multicentre, observational study},
  author={Budzy{\'n}, Krzysztof and Roma{\'n}czyk, Marcin and Kitala, Diana and Ko{\l}odziej, Pawe{\l} and Bugajski, Marek and Adami, Hans O and Blom, Johannes and Buszkiewicz, Marek and Halvorsen, Natalie and Hassan, Cesare and others},
  journal={The Lancet Gastroenterology \& Hepatology},
  volume={10},
  number={10},
  pages={896--903},
  year={2025},
  publisher={Elsevier}
}

@article{bergstrom2026screening,
  title={Screening, sorting, and the feedback cycles that imperil peer review},
  author={Bergstrom, Carl T and Gross, Kevin},
  journal={PLoS Biology},
  volume={24},
  number={2},
  pages={e3003650},
  year={2026},
  publisher={Public Library of Science San Francisco, CA USA}
}

@article{kusumegi2025scientific,
  title={Scientific production in the era of large language models},
  author={Kusumegi, Keigo and Yang, Xinyu and Ginsparg, Paul and de Vaan, Mathijs and Stuart, Toby and Yin, Yian},
  journal={Science},
  volume={390},
  number={6779},
  pages={1240--1243},
  year={2025},
  publisher={American Association for the Advancement of Science}
}

@article{evans2025after,
  title={After science},
  author={Evans, James and Duede, Eamon},
  journal={Science},
  volume={390},
  number={6774},
  pages={eaec7650},
  year={2025},
  publisher={American Association for the Advancement of Science}
}

@article{messeri2026uncritical,
  title={The uncritical adoption of AI in science is alarming—we urgently need guard rails},
  author={Messeri, Lisa and Crockett, MJ},
  journal={Nature},
  volume={653},
  number={8115},
  pages={675--676},
  year={2026},
  publisher={Nature Publishing Group UK London}
}

\end{document}